\documentclass{article}
\usepackage[margin=2.4cm]{geometry}

\PassOptionsToPackage{numbers, compress}{natbib}

\PassOptionsToPackage{prologue,dvipsnames}{xcolor}

\usepackage[utf8]{inputenc} 
\usepackage[T1]{fontenc}    
\usepackage[hidelinks]{hyperref}       
\usepackage{url}            
\usepackage{booktabs}       
\usepackage{amsfonts}       
\usepackage{nicefrac}       
\usepackage{microtype}      

\let\oldemph\emph
\renewcommand{\emph}[1]{{\color{BrickRed}\oldemph{#1}}}

\usepackage{wrapfig}

\usepackage{amsmath}
\usepackage{amssymb}
\usepackage{mathtools}
\usepackage{amsthm}
\usepackage{bbm}

\usepackage[capitalize,noabbrev,nameinlink]{cleveref}

\theoremstyle{definition}
\newtheorem{theorem}{Theorem}

\newtheorem{corollary}[theorem]{Corollary}

\newtheorem{definition}{Definition}

\theoremstyle{remark}

\usepackage[textsize=tiny]{todonotes}

\usepackage{booktabs}       
\usepackage{nicefrac}       
\usepackage{microtype}      
\usepackage{xcolor}         
\usepackage{xspace}

\usepackage{comment}
\usepackage[T1]{fontenc}

\usepackage{tikz}
\usetikzlibrary{patterns}
\usetikzlibrary{calc}
\usetikzlibrary{arrows.meta}
\usetikzlibrary{positioning}
\usetikzlibrary{shapes.geometric}
\usetikzlibrary{decorations.pathreplacing}
\usepackage{pgfplots}
\pgfplotsset{compat=1.18}

\usepackage{mathtools}

\usepackage[T1]{fontenc}

\usepackage{pifont}
\definecolor{crimson}{RGB}{220, 20, 60}
\definecolor{darkgreen}{rgb}{0,0.5,0}

\definecolor{mossgreen}{rgb}{0.68, 0.87, 0.68}

\usepackage{amsthm}
\newtheorem{example}{Example}

\usepackage{enumitem}
\setlist[description]{topsep=0pt, partopsep=0pt, itemsep=2pt, leftmargin=2em}

\usepackage{cleveref}

  \tikzset{     
    e4c node/.style={circle,draw,minimum size=0.3cm,inner sep=0,font=\tiny},
    e4b node/.style={circle,draw,minimum size=0.8cm,inner sep=0,font=\normalsize},
    selected/.style={draw=BrickRed,double}, 
    selected2/.style={preaction={fill, OliveGreen!45}}, 
    selected3/.style={}, 
    e4c edge/.style={sloped,above,font=\footnotesize,-{Classical TikZ Rightarrow[length=0.75mm]},draw,thin},
    e4c path/.style={-{Classical TikZ Rightarrow[length=0.75mm]},draw,thin}
  }

\tikzset{
  efc-node/.style={
    circle,
    draw=black,
    line width=0.6pt,
    minimum size=0.70cm,
    inner sep=0pt,
    fill=white,
    font=\fontsize{5}{5.5}\selectfont
  },
  efc-edge/.style={
    ->,
    draw=black,
    line width=0.5pt
  }
}
 
\tikzset{>=stealth}

\usepackage{natbib}
\usepackage{authblk}

\author{
Georgios Papasotiropoulos\\
{\small
University of Warsaw \vspace{-0.5cm}
}
}
  \date{}

\usepackage{aliascnt}
\usepackage{cleveref}[capitalize]

\usepackage{comment}

\Crefname{corollary}{Corollary}{Corollaries}

\title{Reverse Sequential Proportional Approval Voting Rule:\\ Proportionality and Approximation Guarantees}
\date{}

\begin{document}

\maketitle

\begin{abstract}
\noindent We study the Reverse Sequential Proportional Approval Voting Rule (RevSeqPAV) in approval-based committee elections. Despite its historical prominence and practical use, its properties and guarantees are much less understood than those of Sequential PAV. We analyze it along two dimensions: proportional representation (measured by Extended Justified Representation, its approximations, and proportionality degree) and approximation of the maximum PAV score of instances. We first establish strong negative results for general, unrestricted election instances and then identify settings in which the rule provides meaningful fairness and optimization guarantees.
\end{abstract}

\section{Introduction}
\label{sec:intro}
More than a century before approval-based committee voting became a standard topic in computational social choice \cite{lac-sko:multiwinner-book}, Thorvald N.~Thiele had already proposed some of its central rules \cite{Thie95a}. His main method for electing $k$ out of $m$ candidates based on voters' approval ballots assigns to each committee the sum, over all voters, of the harmonic score $1+\frac12+\cdots+\frac1t$, where $t$ is the number of committee members approved by the voter. This quantity is now known as the PAV score, and a committee maximizing it is selected. The resulting voting rule is \emph{Proportional Approval Voting} (PAV).
Having observed the combinatorial blow-up in the number of possible committees as the number of candidates grows, Thiele also proposed two greedy methods for approximating the PAV objective: one that builds the committee candidate by candidate, each time selecting the candidate with the largest marginal contribution to the PAV score of the current solution, and one that starts from all candidates and iteratively eliminates the candidate whose removal causes the smallest decrease in the current PAV score until reaching a committee of size $k$. In modern terminology, these are \emph{Sequential PAV} (SeqPAV) and \emph{Reverse Sequential PAV} (RevSeqPAV), respectively.

The harmonic function that appears in PAV scores discourages repeatedly representing the same voters and leads to strong proportionality guarantees, most notably Extended Justified Representation (EJR) \cite{aziz2017justified}. However, finding a PAV-optimal committee is NP-hard \cite{azi-gas-gud-mac-mat-wal:c:multiwinner-approval,owaWinner}, making its two simple polynomial-time greedy variants natural alternatives for optimizing the objective of PAV \cite{lac-sko:multiwinner-book}.
Despite their close connection, the two rules have received very different levels of attention in the computational social choice literature. SeqPAV has been studied extensively; see, indicatively, \cite{boehmer2024approval,faliszewski2022robustness,faliszewski2024experimental,israel2025dynamic,janeczko2023ties,malysociotropic,sanchez2024maximin}, whereas its reverse counterpart remains much less understood. This is somewhat surprising, given that, of the two methods, Thiele himself recommended the elimination-based one \cite{Janson16arxiv}.

What is known about RevSeqPAV gives a mixed picture. It satisfies important axiomatic properties like committee monotonicity (in contrast to PAV), support monotonicity (in contrast to SeqPAV) and D'Hondt proportionality \cite{lac-sko:multiwinner-book}, and, while it fails even the mild fairness guarantee of Justified Representation (JR), it satisfies it whenever the number of candidates exceeds the desired committee size by at most two \cite{aziz2017note}. \citet{lac-sko:multiwinner-book} describe RevSeqPAV as a ``non-standard'' approval-based committee rule because, unlike most such rules, it need not satisfy the natural requirement of electing the most approved candidate when only one candidate is to be selected.
In an extensive empirical comparison of approval-based committee rules, \citet{lackner2020utilitarian} found PAV, SeqPAV, and RevSeqPAV among the strongest compromises between utilitarian welfare and representation, and remarked that the latter two are ``virtually indistinguishable'' in their experiments. Similar findings were reported by \citet{proprank}, who studied RevSeqPAV, among others, in the context of producing a ranking over the candidates. 
\looseness-1

Despite the limited theoretical attention it has received, RevSeqPAV has seen real-world use.
\emph{LiquidFeedback}, a platform for online democratic deliberation and collective decision making, uses RevSeqPAV (under the name ``Harmonic Weighting'') for sorting participants' proposals in a way that fairly reflects the electorate's preferences \cite{behrens2014evolution,behrens2014principles}. The formal fairness guarantees of this mechanism have generated some history: a stronger proportionality claim was initially made to justify its use in LiquidFeedback, but was later shown to be false by \citet{aziz2017note}, leading to a relevant corrigendum \cite{liquidfeedback2018corrigendum}. 
Understanding and formalizing the rule's potential and limitations is not only of theoretical interest but also practically relevant.

\subsection{Contributions}
This work provides an analysis of RevSeqPAV along two dimensions: its guarantees for \emph{proportional representation} of the electorate's preferences and its ability to \emph{approximate the maximization objective of PAV}.

On the proportionality side, we  resolve the open question from the textbook by \citet{lac-sko:multiwinner-book} concerning the proportionality degree of RevSeqPAV (see Q9 in Sec. 7.2 and note that RevSeqPAV is the only rule among those considered for which bounds on this fairness metric are missing from Table 4.1 of the book). The proportionality degree of a rule corresponds to a guarantee on the average satisfaction that its outcome provides to voters within a group that, according to the premise of EJR, deserves a satisfaction of $\ell$ \cite{skowron:prop-degree}.
We prove that RevSeqPAV has proportionality degree zero at every level $\ell$: for each $\ell$, there is an election containing an $\ell$-cohesive group whose members obtain no approved winner. This is a strongly negative result for the rule's worst-case fairness guarantees, standing in sharp contrast to those of PAV 
 $(=\ell-1)$ and SeqPAV $(\approx0.7\ell-1,$ for $k\leq 200$).
    Notably, the same construction also settles a question from the proportional-rankings literature explicitly left open by \citet{proprank}. For the measure of proportional representation introduced in that work, our construction shows that RevSeqPAV has a worst-case guarantee of zero.
    
The picture changes sharply when only a very small number of candidates have to be eliminated; we denote this quantity by $r:=m-k$. This was already known to some extent from \citet{aziz2017note}, which establishes JR when at most two candidates ($r\leq2$) have to be eliminated. We strengthen this axiomatic guarantee from JR to EJR. 
We view this result primarily as an existential one, identifying a class of elections on which RevSeqPAV always satisfies EJR, whereas SeqPAV may fail even JR for both $r=1$ and $r=2$.
Nevertheless, there are real-life scenarios in which the desired committee size is close to the number of candidates, for instance when selecting a board for a small organization, or when the candidates have already undergone a shortlisting process. 
Additionally, we show that the requirement of $r\leq 2$ is tight: for every $r\geq 3$ we provide an instance where the outcome of RevSeqPAV violates EJR.

Despite the negative result for EJR, we show that RevSeqPAV always guarantees an approximation of this notion. In particular, the rule satisfies $(1/r)$-EJR, which further formalizes the intuition that the rule performs well when $k$ is close to $m$. In the course of proving this guarantee, we also identify a bound on the size of groups with common preferences that suffices to ensure their representation in every instance; thus, even when EJR fails, sufficiently large groups still receive the level of representation prescribed by the axiom.

On the optimization side, we first answer another relevant question posed by \citet{lac-sko:multiwinner-book} (Q17 in Sec.~7.2), which asks whether RevSeqPAV can achieve an approximation factor for the optimal PAV score that is better than the $1-\frac{1}{e}$ that SeqPAV achieves. Our answer is firmly negative: RevSeqPAV may return a committee whose PAV score is an arbitrarily small fraction of the optimum. 

As in our proportionality analysis, we complement this worst-case result with two positive regimes. First, when the goal is to select a committee of size $k$ that constitutes a large fraction of the $m$ candidates, we show that RevSeqPAV remains close to the optimal PAV score. Our bound of $\frac{k}{m-1}$ converges to $1$ as $k$ tends to infinity for every fixed number of deletions $r$, providing a quantitative version of the intuition that this reverse elimination method is particularly attractive when $k$ is close to $m$ \cite{aziz2017note}. Second, under the assumption that every voter approves at most $b$ candidates, we prove that RevSeqPAV admits an approximation guarantee of $\frac1b$. We also show tightness of this bound.

Taken together, our results provide a systematic account of the potential and limitations of this century-old elimination rule, identifying its worst-case shortcomings together with settings in which it provides strong guarantees.\looseness-1

\section{Preliminaries}

An \emph{election} is a tuple $E=(N,C,A,k)$, where $N=\{1,\ldots,n\}$ is a set of voters, $C$ is a set of $m$ candidates, $A=(A_i)_{i\in N}$ is a profile of approval ballots with $A_i\subseteq C$ denoting the set of candidates approved by voter $i$, and $k$ is the desired committee size. 
We assume that $1\leq k<m$ and that $A_i\neq \emptyset$ for at least one voter $i$.
A (feasible) \emph{committee} is a set $W\subseteq C$ with $|W|=k$.
A \emph{voting rule} $\mathcal{R}$ is a function that given an election $E$ returns a feasible committee $\mathcal{R}(E)$.
We denote by $r$ the \emph{number of candidates that have to be eliminated} from $C$ in order to obtain a feasible committee, that is, $r=m-k$.
For a voter $i$ and a candidate set $S$, we refer to $|A_i\cap S|$ as the \emph{satisfaction} of voter $i$ from $S$, denoted by  $\operatorname{sat}_i(S)$. For a non-empty group of voters $V\subseteq N$, its \emph{average satisfaction} from $W$ is
$\operatorname{avg}_V(W)=\frac{1}{|V|}\sum_{i\in V}\operatorname{sat}_i(W)$. 
We also use
$b=\max_{i\in N}|A_i|$
for the \emph{maximum ballot size}.

\begin{definition}
Given an election $E=(N,C,A,k)$, the PAV score of a set of candidates $S$ is
$\operatorname{PAV}(S)=\sum_{i\in N}H(\operatorname{sat}_i(S))$, where, for a nonnegative integer $t$, $H(t)$ denotes the $t$-th harmonic number; $H(0)=0$ and
$H(t)=1+\frac12+\cdots+\frac1t$, for $t\ge1$.
PAV is the voting rule that returns a feasible committee maximizing PAV score; we denote the maximum PAV score under $E$ by $\operatorname{OPT}_k$.
\end{definition}

For a set $S\subseteq C$ and a candidate $c\in S$, we denote the \emph{marginal loss} from deleting $c$ from $S$ by
$\Delta_S(c)=\operatorname{PAV}(S)-\operatorname{PAV}(S\setminus\{c\})$.
This quantity has a simple voter-wise interpretation: if voter $i$ currently approves $t\geq 1$ candidates in $S$, then this voter contributes $\frac{1}{t}$ to $\Delta_S(c)$ when $c\in A_i$ and $0$ otherwise.
We use $\Delta(c)$ when the set $S$ is clear from the context.
RevSeqPAV iteratively removes the candidate whose deletion causes the smallest immediate decrease in the PAV score, i.e., the one of smallest marginal PAV loss. We resolve ties using a fixed priority order on candidates.

\begin{definition}
Let $C_m=C$. For every $s=m,m-1,\ldots,k+1$, RevSeqPAV identifies the candidate
$c_s\in\arg\min_{c\in C_s}\Delta_{C_s}(c)$
and sets
$C_{s-1}=C_s\setminus\{c_s\}$.
The rule returns the committee $W=C_k$.
\end{definition}

\begin{example}
Consider an election with $C=\{a,b,c\}$ and $k=2$. Three voters approve $\{a,b\}$, three other voters approve $\{a,c\}$, while two voters approve only $\{b\}$ and another two approve only $\{c\}$.
RevSeqPAV starts from $C_3=\{a,b,c\}$. The initial marginal losses are
$\Delta_{C_3}(a)=3\cdot\frac12+3\cdot\frac12=3$,
$\Delta_{C_3}(b)=3\cdot\frac12+2=\frac72 = \Delta_{C_3}(c)$.
Hence, RevSeqPAV deletes $a$ and returns $\{b,c\}$, which in this example coincides with the PAV-optimal committee, while SeqPAV would first select $a$ and return either $\{a,b\}$ or $\{a,c\}$, depending on tie-breaking.
\end{example}

We introduce below the two concepts of proportional representation on which \Cref{sec:prop} focuses. Both provide guarantees to sufficiently large groups of voters with common preferences, with the first (EJR) considering the maximum satisfaction attained by a voter within the group and the second (proportionality degree) their average satisfaction. We also define two relaxations of EJR which are due to \citet{aziz2017justified} and to \citet{skowron:prop-degree} and \citet{do2022online}.

\begin{definition}
\label{def:ejr}
For an integer $\ell\ge1$, a group of voters $V\subseteq N$ is \emph{$\ell$-cohesive} if
$|V|\ge \ell n/k$
and
$|\bigcap_{i\in V}A_i|\ge\ell$.
A committee $W$ satisfies \emph{Extended Justified Representation} (EJR) if, for every $\ell\ge1$ and every $\ell$-cohesive group $V$, there exists a voter $i\in V$ such that
$|A_i\cap W|\ge\ell$.
If the guarantee is satisfied for $\ell=1,$ then $W$ satisfies \emph{Justified Representation} (JR).
If for every integer $\ell\ge1$ and every group $V\subseteq N$ such that
$|V|\ge \frac{1}{\alpha}\ell\frac{n}{k}$
and
$|\bigcap_{i\in V}A_i|\ge\ell$,
there exists a voter $i\in V$ with
$|A_i\cap W|\ge\ell$ then we say that $W$ satisfies $\alpha$-EJR, $\alpha \in (0,1]$.
A rule $\mathcal{R}$ satisfies EJR (respectively, JR and $\alpha$-EJR) if for every election $E,$ it holds that $\mathcal{R}(E)$ satisfies EJR (respectively, JR and $\alpha$-EJR).
\end{definition}

\begin{definition}
A rule $\mathcal{R}$ has \emph{proportionality degree} at least $d(\ell)$ at level $\ell$ if, for every election $E$, and for every $\ell$-cohesive group $V$,
the following holds: 
$\operatorname{avg}_V(\mathcal{R}(E))\ge d(\ell)$.
We denote by $d_{\mathcal{R}}(\ell)$ the largest value for which this guarantee holds for $\mathcal{R}$. 
\end{definition}

In order to evaluate how well RevSeqPAV approximates the objective underlying PAV, for $\alpha\in [0,1]$ we say that a rule $\mathcal{R}$ has approximation factor $\alpha$ of the optimal PAV-score if
$\operatorname{PAV}(\mathcal{R}(E))\ge\alpha\operatorname{OPT}_k,$
for every election $E$.
The worst-case approximation ratio of a rule $\mathcal R$ is
$\inf_E \frac{\mathrm{PAV}(\mathcal R(E))}{\mathrm{OPT}_k}$.

\section{Proportionality Guarantees of RevSeqPAV}
\label{sec:prop}
Recall that for $r\leq 2$, RevSeqPAV satisfies the (much weaker than EJR) proportionality axiom of JR \cite{aziz2017note}. Our first result strengthens this guarantee to EJR.

\begin{theorem}
\label{thm:revseq-ejr-two-deletions}
Let $W$ be a committee returned by Reverse Sequential PAV. If at most two candidates are deleted, i.e., if $r\le 2$, then $W$ satisfies EJR.
\end{theorem}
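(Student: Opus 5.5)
The plan is to argue by contradiction. Suppose some $\ell$-cohesive group $V$ exists, with a set $T\subseteq\bigcap_{i\in V}A_i$ of size $\ell$, such that $\operatorname{sat}_i(W)\le \ell-1$ for all $i\in V$. The one structural tool is a voter-counting identity. For every $S\subseteq C$,
$$\sum_{c\in S}\Delta_S(c)=|\{i\in N: A_i\cap S\neq\emptyset\}|\le n,$$
because each voter with $t\ge1$ approved candidates in $S$ contributes $\frac1t$ to exactly $t$ marginal losses. Hence $\min_{c\in S}\Delta_S(c)\le n/|S|$. I will also use submodularity of the PAV score: deleting candidates can only increase the marginal losses of the remaining ones, so $\Delta_{S'}(c)\ge\Delta_S(c)$ whenever $c\in S'\subseteq S$. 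Since every voter of $V$ approves all of $T$, failure forces at least one deleted candidate to lie in $T$.

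\textbf{Case $r=1$.} Let $c$ be the deleted candidate, so $c\in T$, and every $i\in V$ has $\operatorname{sat}_i(C)\le\ell$. Then $\Delta_C(c)\ge|V|/\ell\ge n/k$. The identity gives $\min_{c'\in C}\Delta_C(c')\le n/(k+1)<n/k$, so $c$ was not the minimizer, a contradiction. The strict inequality makes tie-breaking irrelevant.

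\textbf{Case $r=2$.} Let $c_m$ be deleted first and $c_{m-1}$ second, and put $S=C\setminus\{c_m\}=W\cup\{c_{m-1}\}$. If $c_{m-1}\in T$, I apply the $r=1$ argument to the second step. Every $i\in V$ has $\operatorname{sat}_i(S)=\operatorname{sat}_i(W)+1\le\ell$, so $\Delta_S(c_{m-1})\ge n/k>n/(k+1)\ge\min_S\Delta_S$, a contradiction.

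The remaining subcase is $c_{m-1}\notin T$, $c_m\in T$. Then $|T\cap W|=\ell-1$, which forces $A_i\cap W=T\cap W$ for all $i\in V$. So every $i\in V$ has $A_i\subseteq T\cup\{c_{m-1}\}$, and $V$ splits into $V_1$ (those approving $c_{m-1}$) and $V_0$ (the rest). I will combine three facts:
\begin{itemize}
\item first-step minimality, $\Delta_C(c_m)\le\Delta_C(c)$ for all $c\in C$;
\item second-step minimality, $\Delta_S(c_{m-1})\le\Delta_S(c)$ for all $c\in S$;
\item the identity applied to $S$ and to $C$.
\end{itemize}
A first attempt: $\Delta_C(c_m)\ge|V_0|/\ell+|V_1|/(\ell+1)$, and by submodularity $\Delta_S(c_{m-1})\ge\Delta_C(c_{m-1})\ge\Delta_C(c_m)$. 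In addition, each $t\in T\cap W$ has $\Delta_S(t)\ge|V|/\ell$. Summing over $S$ then gives
$$n\ \ge\ (\ell-1)\tfrac{|V|}{\ell}+(k+2-\ell)\,\Delta_S(c_{m-1}).$$

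\textbf{Main obstacle.} With the crude bound $\Delta_C(c_m)\ge|V|/(\ell+1)$, this summation yields only $k\ge 2\ell-1$ and no contradiction. The crude bound throws away too much.

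The refinement I would try first is to track the split $V_0,V_1$ exactly and avoid losing the same mass twice. Voters in $V_1$ contribute $\frac1\ell$ to $\Delta_S(c_{m-1})$ directly, giving $\Delta_S(c_{m-1})\ge|V_1|/\ell$. Voters in $V_0$ contribute $\frac1\ell$ to $\Delta_C(c_m)$ and $\frac1{\ell-1}$ to each $\Delta_S(t)$.

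Separately, I would compare $c_m$ against the candidates of $T\cap W$ in the first step. In $C$, every voter of $V$ contributes to each $t\in T$ exactly as much as to $c_m$. Hence $\Delta_C(c_m)\le\Delta_C(t)$ carries information only about voters outside $V$. A swap-style comparison — removing $c_{m-1}$ instead, or weighing voters outside $V$ against $c_m$ — should show that the voters outside $V$, who number at most $n-\ell n/k$, cannot make every candidate outside $T$ costlier than $c_m$ while also making $c_{m-1}$ the cheapest in $S$.

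If the pure counting route stays stuck, the fallback is the explicit small-instance reasoning used for JR by \citet{aziz2017note}, generalized from $\ell=1$ to arbitrary $\ell$.
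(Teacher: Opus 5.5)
Your case split, the counting identity $\sum_{c\in S}\Delta_S(c)\le n$, and the final contradiction $n/k>n/(k+1)$ all match the paper. The cases you finish are correct: $r=1$, and $r=2$ with $c_{m-1}\in T$. For $r=1$, your direct counting argument is a valid substitute for the paper's observation that a single deletion yields a PAV-optimal committee, which satisfies EJR. There is, however, a genuine gap: you never close the subcase $c_m\in T$, $c_{m-1}\notin T$. The directions you list for it (summing over all of $S$, comparing $c_m$ with $T\cap W$, reverting to case analysis) are not what is needed. Your first attempt loses exactly the quantity that matters. Applying monotonicity to $\Delta(c_{m-1})$ as a whole, $\Delta_S(c_{m-1})\ge\Delta_C(c_{m-1})$, discards the fact that each voter of $V_1$ raises its contribution to $c_{m-1}$ from $\frac1{\ell+1}$ to $\frac1\ell$ once $c_m$ is deleted.

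The missing step is to compare $c_m$ with $c_{m-1}$ at the first step, separating contributions from $V$ and from $N\setminus V$. Your idea of weighing outside voters against $c_m$ is the right one, but it must be done against $c_{m-1}$ specifically. Let $\alpha,\beta$ be the contributions of voters outside $V$ to $\Delta_C(c_m)$ and $\Delta_C(c_{m-1})$. Since $A_i\subseteq T\cup\{c_{m-1}\}$ for $i\in V$, first-step minimality reads
\[
\frac{|V_0|}{\ell}+\frac{|V_1|}{\ell+1}+\alpha\;\le\;\frac{|V_1|}{\ell+1}+\beta .
\]
The $V_1$ terms cancel, and since $\alpha\ge 0$ we get $\beta\ge|V_0|/\ell$. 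After $c_m$ is deleted, each outside voter's contribution to $c_{m-1}$ can only grow, so their total stays at least $|V_0|/\ell$. Each voter of $V_1$ now approves exactly $\ell$ candidates of $S$ (the $\ell-1$ in $T\cap W$, plus $c_{m-1}$), so it contributes $\frac1\ell$. Hence
\[
\Delta_S(c_{m-1})\ge\frac{|V_0|}{\ell}+\frac{|V_1|}{\ell}=\frac{|V|}{\ell}\ge\frac nk>\frac n{k+1}\ge\min_{c\in S}\Delta_S(c),
\]
which contradicts second-step minimality. With this step inserted, your proof is complete and coincides with the paper's.
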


\begin{proof}
Suppose first that $m=k+1$. Then $W$ is a PAV-optimal committee of size $k$. Since PAV satisfies EJR, so does $W$. It remains to consider $m=k+2$. Let $a$ be the first deleted candidate and $b$ the second one. Suppose, towards a contradiction, that $W$ violates EJR. Then there exist some $\ell$ and an $\ell$-cohesive group $V$ such that every voter $i\in V$ satisfies $|A_i\cap W|\le \ell-1$. Consider a set $T$ of $\ell$ candidates that are approved by every voter in $V$. Since each voter in $V$ obtains fewer than $\ell$ approved winners, at least one member of $T$ must be deleted.
We distinguish two cases.
\begin{description}
    \item[Case 1.] Say that $b\in T$. 
    Immediately before $b$ is deleted, the current set consists of $W\cup\{b\}$. Every voter in $V$ approves $b$ and at most $\ell-1$ members of $W$. Hence every such voter approves at most $\ell$ candidates from $W\cup\{b\}$. Therefore their contribution to the marginal loss of $b$ is at least $1/\ell$, and thus
$\Delta_{W\cup\{b\}}(b)\ge |V|/\ell\ge n/k$.

Consider a candidate set $C'$. If a voter $i$ approves $t>0$ members of $C'$, then the removal of any such approved candidate $c$ contributes $1/t$ to $\Delta_{C'}(c)$. Since there are exactly $t$ such candidates, voter $i$ contributes in total $t\cdot (1/t)=1$ to $\sum_{c\in C'}\Delta_{C'}(c)$. If voter $i$ approves no member of $C'$, then their contribution is $0$. Therefore,
$\sum_{c\in C'}\Delta_{C'}(c)=|\{i\in N:A_i\cap C'\neq\emptyset\}|\le n$.

Given that $|W\cup\{b\}|=k+1$ and that $\sum_{c\in W\cup\{b\}}\Delta_{W\cup\{b\}}(c)\le n$, there exists a candidate $\hat c\in W\cup\{b\}$ such that
$\Delta_{W\cup\{b\}}(\hat c)\le n/(k+1)$;
otherwise, if every candidate had marginal loss strictly greater than $n/(k+1)$ then their total marginal loss would exceed $n$. Since Reverse Sequential PAV deletes a candidate of minimum marginal loss, its choice of $b$ implies
$\Delta_{W\cup\{b\}}(b)\le \Delta_{W\cup\{b\}}(\hat c)\le n/(k+1)$.
This contradicts $\Delta_{W\cup\{b\}}(b)\ge n/k$, since $n/k>n/(k+1)$.
\item[Case 2.] Say now that $b\notin T$.
Since at least one candidate in $T$ must be deleted and the only deleted candidates are $a$ and $b$, we have $a\in T$. Hence all $\ell-1$ candidates in $T\setminus\{a\}$ belong to $W$.
The group $V$ witnesses a violation of EJR, so every voter $i\in V$ satisfies $|A_i\cap W|\le \ell-1$. It follows that no voter in $V$ can approve any winner outside $T\setminus\{a\}$; otherwise, that voter would approve at least $\ell$ members of $W$.

Let $x$ be the number of voters in $V$ who approve $b$. Before the first deletion, every voter in $V$ approves the $\ell$ candidates in $T$, and may additionally approve $b$, but approves no other candidate. More precisely, the $|V|-x$ voters who do not approve $b$ have exactly $\ell$ approved candidates in the initial candidate set, namely the candidates in $T$, whereas the $x$ voters who approve $b$ have exactly $\ell+1$ approved candidates, namely the candidates in $T\cup\{b\}$.

Since every voter in $V$ approves $a$, the contribution of the first group to the initial marginal loss of $a$ is $(|V|-x)/\ell$, while the contribution of the second group is $x/(\ell+1)$. Hence the total contribution of voters in $V$ to the marginal loss of $a$ is
$(|V|-x)/\ell+x/(\ell+1)$.
By contrast, only the $x$ voters in the second group approve $b$, and each of them initially has $\ell+1$ approved candidates. Thus the total contribution of voters in $V$ to the initial marginal loss of $b$ is
$x/(\ell+1)$.

Let $\alpha$ and $\beta$ denote the contributions of voters outside $V$ to the initial marginal losses of $a$ and $b$, respectively. Since Reverse Sequential PAV deletes $a$ before $b$, we have
$(|V|-x)/\ell+x/(\ell+1)+\alpha\le x/(\ell+1)+\beta$.
Hence
$\beta\ge (|V|-x)/\ell+\alpha\ge (|V|-x)/\ell$.
Observe that, as Reverse Sequential PAV proceeds, the contribution of any voter to the marginal loss of a fixed candidate $c$ that she approves and that has not yet been deleted can only increase. Hence, after $a$ is deleted, the contribution of voters outside $V$ to the marginal loss of $b$ is still at least $(|V|-x)/\ell$.
Moreover, after $a$ is deleted, each of the $x$ voters in $V$ who approves $b$ has exactly $\ell$ approved remaining candidates: the $\ell-1$ members of $T\setminus\{a\}$ and $b$. Hence these voters contribute $x/\ell$ to the marginal loss of $b$. It follows that
$\Delta_{W\cup\{b\}}(b)\ge (|V|-x)/\ell+x/\ell=|V|/\ell\ge n/k$.
Therefore, again, some candidate among the $k+1$ remaining candidates must have marginal loss at most $n/(k+1)$. Thus, $b$ cannot be the candidate selected for deletion.
\end{description}
Since both cases lead to a contradiction the proof is complete.
\end{proof}

While this positive guarantee applies to the rather specific case of $m-k\leq 2$, as noted in \Cref{sec:intro}, there are reasonable real-world scenarios in which $k$ can be this close to $m$. Unfortunately, the guarantee cannot be extended further, as established by the tightness result below.

\begin{theorem}
\label{thm:revseq-ejr-three-deletions}
For every $r\ge3$ there exists an election with $m-k=r$ in which the committee returned by Reverse Sequential PAV violates EJR.
\end{theorem}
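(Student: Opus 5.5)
We will prove \Cref{thm:revseq-ejr-three-deletions} by an explicit construction for $r=3$, and then lift it to every $r\ge 3$ by padding. For the padding, we add $r-3$ dummy candidates that no voter approves and place them first in the tie-breaking priority order. A dummy has marginal loss $0$ throughout, and every marginal loss is nonnegative. So RevSeqPAV deletes all dummies first, using the priority order to break any tie with other zero-loss candidates. It then runs exactly as on the original $r=3$ instance. Cohesiveness of groups and the satisfaction of voters are unaffected, since neither $n$, $k$, nor any ballot restricted to real candidates changes. It therefore suffices to produce one election with $m=k+3$ whose RevSeqPAV outcome violates EJR.

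The proof of \Cref{thm:revseq-ejr-two-deletions} tells us what such an instance must look like, and we will design the construction to avoid each of its contradictions. First, we must have $\ell\ge 2$. For $\ell=1$, a group approving a candidate $g$ contributes $|V|\ge n/k$ to $\Delta(g)$ at every step. Losses of remaining candidates only grow, and the total loss is always at most $n$ spread over more than $k$ candidates. Hence $g$ is never the minimum. Second, the candidate deleted last must not be approved by all of $V$. Otherwise, as in Case 1, its loss is at least $|V|/\ell\ge n/k>n/(k+1)$. Our plan is therefore the following. Take $\ell=2$ and a group $V$ with $|V|=2n/k$ that jointly approves $T=\{t_1,t_2\}$. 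Split $V$ into halves $V_1,V_2$: voters in $V_1$ additionally approve a candidate $d_1$, and voters in $V_2$ additionally approve $d_2$. Initially each voter in $V$ approves $3$ candidates, so $V$ contributes only $|V|/3$ to $\Delta(t_1)$. We intend the deletion order $t_1,d_1,d_2$. After these deletions every voter of $V$ keeps only $t_2$, which violates EJR. The remaining $k-1$ candidates are supported by disjoint blocks of outside voters, each block approving a single candidate of size $s$ slightly above the relevant thresholds. Then these candidates each have constant loss $s$ and are never deleted.

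The key steps, in order, are as follows. (i) Verify that $t_1$ has the minimum initial loss: $|V|/3<s$ and $|V|/3<\Delta(d_i)$. (ii) After $t_1$ is removed, $V_1$ contributes $|V_1|/2$ to $\Delta(d_1)$. We add a few outside voters, or outside voters approving both $d_1$ and $d_2$, so that $d_1$ is now the minimum, below both $\Delta(t_2)$ and $s$. (iii) After $d_1$ is removed, $\Delta(d_2)=|V_2|/2+(\text{outside part})$, while $\Delta(t_2)$ has grown to $|V_1|+|V_2|/2$. We need $\Delta(d_2)$ to be the smallest loss, which is possible because $d_2$ is approved by only half of $V$. (iv) Check the bookkeeping: $n=|V|+(k-1)s+(\text{extra outside voters})$, $|V|\ge 2n/k$, and $|T|=2$, so that $V$ is $2$-cohesive.

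The main obstacle is making (i)--(iv) hold simultaneously. The threshold $|V|\ge 2n/k$ forces $s$ to be at most about $n/k$. Step (i) needs $s>|V|/3\approx 2n/(3k)$, and step (iii) needs $|V_2|/2\approx n/(2k)$ plus the outside support to stay below $s$. These constraints leave a nonempty window only if the parameters are chosen carefully. I would first try $|V_1|=|V_2|=n/k$ with $s$ between $2n/(3k)$ and $n/k$, and no outside support on $d_1,d_2$. If ties or the inequalities fail, the fallback is to treat the block sizes as unknowns and solve the resulting small system of linear inequalities, or to search small $k$ by computer. After that, we fix integral values by scaling $n$ and use the priority order to break any remaining ties.
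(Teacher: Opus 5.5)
Your mechanism is the paper's. In its instance ($k=12$, $m=15$, $n=600$) one has $t_1=c_2$, $t_2=c_1$, $d_i=x_i$, deletion order $c_2,x_1,x_2$, and the same padding by unapproved candidates. There is still a genuine gap: for an existence statement like this, the proof is a verified instance, and you never produce one.

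Moreover, the parameters you propose to try first fail at step (i). With no outside support on $d_i$ you get $\Delta(d_i)=|V_i|/3=|V|/6<|V|/3=\Delta(t_1)$, so RevSeqPAV deletes $d_1$ and $d_2$ first. Then $\Delta(t_1)=\Delta(t_2)=|V|/2=n/k>s$, a filler goes third, and $V$ keeps both $t_1$ and $t_2$.

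Your step (ii) also misplaces the role of the outside voters. Once $t_1$ is gone, $d_1$ is already the minimum without them. They are needed for (i), and (ii) bounds them from above. Write $h=|V_i|$ and let $o$ be the number of voters approving only $d_i$. You then need:
\begin{itemize}
\item $h/3<o<h/2$, from steps (i) and (ii);
\item fillers of loss $s>h/2+o>5h/6$;
\item hence $n\ge 2h+2o+(k-1)s>h(5k+11)/6$, which together with $n\le kh$ forces $k\ge 12$ (or $k\ge 11$ if you exploit ties).
\end{itemize}
So a search over small $k$ within your template finds nothing. Supporters approving both $d_1$ and $d_2$ can never work: after $d_1$ is deleted their full weight falls on $d_2$. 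That pushes the required $s$ above $7h/6$ and $n$ above $kh$ for every $k$.

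The paper's numbers sit inside this window:
\begin{itemize}
\item $h=50$ and $o=17$;
\item one block of $465$ voters approving all eleven fillers, giving each filler loss $465/11>42$ and sidestepping integrality;
\item a single voter approving only $c_1$, to break the $c_1/c_2$ tie.
\end{itemize}
The deletions then have losses $100/3$, $42$, $42$.

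Separately, your preliminary claim that $\ell\ge 2$ is forced is false, though you do not rely on it. Voters of $V$ may approve several candidates that are all eventually deleted. So when $g$ is deleted with $t$ deletions remaining, they may contribute only $|V|/t$ to $\Delta(g)$, not $|V|$.

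In fact JR already fails with $r=3$. Take $k=8$ and the following voters:
\begin{itemize}
\item $20$ voters approving $\{g,d_1\}$ and $20$ approving $\{g,d_2\}$;
\item $11$ approving only $d_1$ and $11$ approving only $d_2$;
\item $250$ approving $\{y_1,\dots,y_8\}$.
\end{itemize}
Then $n=312$, and the $40\ge n/k=39$ voters approving $g$ form a $1$-cohesive group. RevSeqPAV deletes $g$ first (loss $20$, against $21$ for each $d_i$ and $31.25$ for each $y_j$). It then deletes $d_1$ and $d_2$ (loss $31<250/8$), leaving that group with no approved winner. This does not break your construction, since you take $\ell=2$ anyway, but it shows your design heuristic wrongly rules out a simpler route.
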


\begin{proof}
Consider an election with $k=12$, $m=15$, $n=600$ and let $C =\{c_1,c_2,x_1,x_2,y_1,\ldots,y_{11}\}$.
The approval profile is as follows.

\begin{center}
\begin{tabular}{c|c}
Number of voters & Approved candidates\\
\hline
$50$ & ${c_1,c_2,x_1}$\\
$50$ & ${c_1,c_2,x_2}$\\
$17$ & ${x_1}$\\
$17$ & ${x_2}$\\
$1$ & ${c_1}$\\
$465$ & ${y_1,\ldots,y_{11}}$
\end{tabular}
\end{center}

Let $V$ consist of the first $100$ voters. Since
$|V|=100=2n/k$
and all voters in $V$ commonly approve $c_1$ and $c_2$, the group $V$ is $2$-cohesive.

We first present at a high level how RevSeqPAV works in this instance. When the temporary committee consists of the whole set of candidates $C$, the presence of \(x_1\) and \(x_2\) in the committee makes \(c_2\) look redundant. Initially, each half of the aforementioned cohesive group has three approved candidates, so their common candidates have a small marginal loss causing RevSeqPAV to delete \(c_2\). But then it subsequently deletes \(x_1\) and \(x_2\) as well. In other words, the alternatives that made \(c_2\) look dispensable do not survive themselves and this leads to a violation of EJR witnessed by $V$.

We now follow in detail the Reverse Sequential PAV deletions. Initially,
$\Delta(c_2)=100/3$,
$\Delta(x_1)=\Delta(x_2)=50/3+17=101/3$,
$\Delta(c_1)=100/3+1=103/3$,
and
$\Delta(y_j)=465/11$
for every $j$.
Thus $c_2$ is deleted first.
After deleting $c_2$,
$\Delta(x_1)=\Delta(x_2)=50/2+17=42$,
$\Delta(c_1)=100/2+1=51$,
and
$\Delta(y_j)=465/11>42$.
Hence one of $x_1$ and $x_2$ is deleted next. Suppose it is $x_1$ using some fixed tie-breaking rule.
After deleting $x_1$, we still have
$\Delta(x_2)=42$,
while
$\Delta(y_j)=465/11>42$.
Moreover, the marginal loss of $c_1$ has increased to
$50+25+1=76$.
Therefore $x_2$ is deleted third.
The resulting committee is
$W=\{c_1,y_1,\ldots,y_{11}\}$.
Every voter in the $2$-cohesive group $V$ approves exactly one member of $W$, namely $c_1$, hence EJR is violated.

This proves the claim for $r=3$. For any $r>3$, add $r-3$ candidates approved by no voter. Each such candidate has marginal loss $0$, so all of them are deleted before any candidate in the construction above. Afterwards the same three deletions occur and the same EJR violation results. Hence EJR can fail for every $r\ge3$.\looseness-1
\end{proof}

On the positive side, while EJR cannot be guaranteed for arbitrary instances, an approximation of it is possible---refer to \Cref{def:ejr} for the relevant definition. Our result provides a first bound, from which an easier-to-interpret guarantee (\Cref{cor:approxejr}) follows directly.\looseness-1

\begin{theorem}
\label{thm:revseq-alpha-ejr}
Let $r\geq3$ and set
$\alpha=\frac{k+r}{kr}$.
For every $\varepsilon>0$ such that
$0<\alpha-\varepsilon\leq1$,
the committee returned by Reverse Sequential PAV
satisfies $(\alpha-\varepsilon)$-EJR.
\end{theorem}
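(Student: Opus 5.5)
The plan is a contradiction argument that reuses the averaging identity from the proof of \Cref{thm:revseq-ejr-two-deletions}: for any candidate set $C'$ we have $\sum_{c\in C'}\Delta_{C'}(c)\le n$. Consequently, whenever RevSeqPAV deletes a candidate from a current set of size $s$, the deleted candidate has marginal loss at most $n/s$.

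Suppose $W$ violates $(\alpha-\varepsilon)$-EJR. Then there are an $\ell$ and a group $V$ with $|V|\ge \frac{1}{\alpha-\varepsilon}\cdot\frac{\ell n}{k}>\frac{\ell n}{\alpha k}=\frac{\ell n r}{k+r}=\frac{\ell n r}{m}$, and a set $T$ of $\ell$ candidates approved by all of $V$, such that every $i\in V$ has $|A_i\cap W|\le \ell-1$. First, $\ell\le k$, since otherwise $|V|\ge \ell n/k>n$, which is impossible. Second, some member of $T$ must be deleted.

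Let $c\in T$ be the \emph{last} member of $T$ to be deleted, and let $s$ with $k+1\le s\le m$ be the size of the current set $C_s$ at that step. Then $C_s$ consists of $W$ together with $s-k$ candidates that are deleted at this or a later step, one of them being $c$ itself. Each voter $i\in V$ approves at most $\ell-1$ candidates in $W$ and at most $s-k$ candidates in $C_s\setminus W$. Hence $|A_i\cap C_s|\le \ell-1+s-k$, and since $i$ approves $c$, she contributes at least $1/(\ell-1+s-k)$ to $\Delta_{C_s}(c)$. Combining this with the averaging bound gives
\[
\frac{|V|}{\ell-1+s-k}\le \Delta_{C_s}(c)\le \frac{n}{s},
\qquad\text{so}\qquad |V|\le n\cdot\frac{s-(k+1-\ell)}{s}.
\]

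Because $\ell\le k$, the quantity $k+1-\ell$ is positive, so the right-hand side is increasing in $s$. It is therefore maximized at $s=m=k+r$, which gives $|V|\le n(\ell-1+r)/m$. Finally, $\ell-1+r\le \ell r$ is equivalent to $(\ell-1)(r-1)\ge 0$, which holds. Thus $|V|\le \ell n r/m$, contradicting the strict lower bound on $|V|$ obtained from $\varepsilon>0$.

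The main obstacle is choosing the right deletion step to examine. At earlier deletions, voters in $V$ may approve many candidates that are themselves deleted later, so their contribution to the marginal loss can be tiny. Taking the \emph{last} deleted member of $T$, and bounding the voters' surplus approvals by the number of still-present future-deleted candidates $s-k$, is what makes the bound close. The remaining steps are routine checks: the monotonicity in $s$, the case $\ell>k$, and the fact that the hypothesis $\alpha-\varepsilon\le1$ keeps the definition meaningful.
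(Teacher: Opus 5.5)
Your proof is correct and follows essentially the same route as the paper: you lower-bound the marginal loss of a deleted commonly approved candidate by $|V|/(\ell-1+t)$, where $t=s-k$ deletions remain, and play this against the averaging bound $n/(k+t)$. Two small remarks: the choice of the \emph{last} deleted member of $T$ is not what makes the argument work, since $|A_i\cap C_s|\le \ell-1+(s-k)$ holds at the deletion of any member of $T$ (the paper uses an arbitrary one); and the paper orders the algebra differently, first $\frac{\ell-1+t}{\ell}\le t$ and then monotonicity of $t\mapsto \frac{kt}{k+t}$, which spares it your separate check that $\ell\le k$.
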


\begin{proof}
Suppose, towards a contradiction, that there exist an integer
$\ell\ge1$ and a group $V$ such that
$|V|\ge \frac{\ell n}{(\alpha-\varepsilon)k}$,
the voters in $V$ commonly approve at least $\ell$ candidates,
and every voter $i\in V$ satisfies $|A_i\cap W|\le\ell-1$, where $W$ is the outcome of RevSeqPAV.
Let $T$ be a set of $\ell$ candidates commonly approved by all voters in $V$. Since no voter in $V$ obtains $\ell$ approved winners, at least one candidate from $T$ must be deleted.
Consider one such commonly approved candidate $c$ at the moment when it is deleted, and let $t$ be the number of deletions that still remain at that moment, counting the deletion of $c$ itself. Thus, the current candidate set has size $k+t$, where $1\le t\le r$.

Each voter in $V$ approves at most $\ell-1$ final winners. In addition, at most the $t$ candidates that will still be deleted can currently be approved by that voter. Hence every voter in $V$ currently approves at most
$\ell-1+t$
candidates.
Since every voter in $V$ approves $c$, their total contribution to the marginal loss of $c$ is at least
$\frac{|V|}{\ell-1+t}$.
Using $|V|\ge \frac{\ell n}{(\alpha-\varepsilon)k}$, we obtain
$\Delta(c)\ge
\frac{\ell n}{(\alpha-\varepsilon)k(\ell-1+t)}$.
On the other hand, there are currently $k+t$ candidates, and the sum of their marginal losses is at most $n$. Since Reverse Sequential PAV deletes a candidate of minimum marginal loss,
$\Delta(c)\le\frac{n}{k+t}$.

Combining the two inequalities gives
$\frac{\ell n}{(\alpha-\varepsilon)k(\ell-1+t)}
\le \frac{n}{k+t}$,
and hence
$\frac{1}{\alpha-\varepsilon}
\le \frac{k(\ell-1+t)}{\ell(k+t)}$.
We note here that for an $\ell$-cohesive group we would have $1\leq \frac{k(\ell-1+t)}{\ell(k+t)} \Leftrightarrow 
\ell t \leq kt-k 
\Leftrightarrow 
\ell \leq k(1-\frac1t)$.
It holds that
$\frac{k(\ell-1+t)}{\ell(k+t)}
\le
\frac{kt}{k+t}$,
because
$\frac{\ell-1+t}{\ell}\le t$
is equivalent to
$(\ell-1)(t-1)\ge0$.
Finally, since $t\le r$ and the function
$t\mapsto kt/(k+t)$ is increasing,
$\frac{1}{\alpha-\varepsilon}
\le \frac{kt}{k+t}
\le \frac{kr}{k+r}
= \frac1\alpha$.
This contradicts $\varepsilon>0$.
\end{proof}

\begin{corollary}
\label{cor:approxejr}
    Let $W$ be a committee returned by Reverse Sequential PAV.
The committee $W$ satisfies $\frac{1}{r}$-EJR.
Moreover, even when $W$ fails EJR, it holds that for every integer $\ell\ge1$ satisfying
$\ell>k\left(1-\frac1r\right)$,
every $\ell$-cohesive group satisfies the EJR requirement.
\end{corollary}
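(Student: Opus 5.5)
The plan is to derive both parts of the corollary directly from \Cref{thm:revseq-alpha-ejr} and its proof. For small $r$ we fall back on \Cref{thm:revseq-ejr-two-deletions}. If $r\le 2$, that theorem gives full EJR. Since $1/r\le 1$, full EJR implies $\frac1r$-EJR, and it trivially implies the second claim. So from now on assume $r\ge3$.

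For the first part, note that $\alpha=\frac{k+r}{kr}=\frac1r+\frac1k>\frac1r$. Take $\varepsilon=\alpha-\frac1r=\frac1k>0$. Then $\alpha-\varepsilon=\frac1r\in(0,1]$, so \Cref{thm:revseq-alpha-ejr} applies and yields $\frac1r$-EJR directly. The only point to check is the admissibility condition $0<\alpha-\varepsilon\le 1$, which is immediate.

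For the second part, we reuse the argument inside the proof of \Cref{thm:revseq-alpha-ejr} with $|V|\ge \ell n/k$, i.e.\ the factor $\alpha-\varepsilon$ replaced by $1$.
\begin{enumerate}
\item Suppose, towards a contradiction, that an $\ell$-cohesive $V$ has every voter with at most $\ell-1$ winners.
\item Pick a deleted common candidate $c$ and let $t\in\{1,\dots,r\}$ be the number of remaining deletions, counting $c$'s.
\item As in that proof, each voter in $V$ currently approves at most $\ell-1+t$ candidates, so
\[
\frac{\ell n}{k(\ell-1+t)}\le \Delta(c)\le \frac{n}{k+t}.
\]
\item This rearranges to $\ell(k+t)\le k(\ell-1+t)$, i.e.\ $\ell t\le k(t-1)$, i.e.\ $\ell\le k(1-\frac1t)$, exactly the equivalence the paper already notes.
\item Since $t\le r$, we get $k(1-\frac1t)\le k(1-\frac1r)$. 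This contradicts $\ell>k(1-\frac1r)$.
\end{enumerate}
For $t=1$ the bound reads $\ell\le 0$, which is impossible. This is consistent with the argument, so no separate case is needed.

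There is no real obstacle here. The main care point is the bookkeeping in the second part: the inequality must be taken in the non-strict form that the minimum-loss deletion gives. The monotonicity of $1-\frac1t$ in $t$ must be used in the correct direction, so that the worst case is $t=r$.
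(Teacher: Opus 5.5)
Your proposal is correct and follows the paper's proof. Like the paper, you handle $r\le 2$ via the EJR theorem for two deletions, obtain $\frac1r$-EJR from the $\alpha$-EJR theorem with $\varepsilon=\frac1k$, and for the second claim rerun that theorem's argument with factor $1$ to get $\ell\le k\bigl(1-\frac1t\bigr)\le k\bigl(1-\frac1r\bigr)$.
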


\begin{proof}
If $r\leq2$, both claims follow from Theorem~1.
Suppose therefore that $r\geq3$.
Applying Theorem~3 with $\varepsilon=1/k$ gives
$(1/r)$-EJR, since
$\frac{k+r}{kr}-\frac1k=\frac1r$.
For the second claim, recall from the proof of Theorem~\ref{thm:revseq-alpha-ejr} that if an $\ell$-cohesive group violates the EJR requirement and $t\le r$ deletions remain when one of its commonly approved candidates is deleted, then
$\ell\le k\left(1-\frac1t\right)$.
Since $t\le r$,
$\ell\le k\left(1-\frac1t\right)\le k\left(1-\frac1r\right)$.
Hence no $\ell$-cohesive group with
$\ell>k\left(1-\frac1r\right)$
can violate the EJR requirement.
\end{proof}

We now turn to proportionality degree, for which the performance of RevSeqPAV was previously unknown \cite{lac-sko:multiwinner-book}. While this metric yields substantial guarantees for PAV and SeqPAV, our result shows that the same is not true for RevSeqPAV, at least on unrestricted instances. This is consistent with our earlier negative finding for EJR in the unrestricted setting (\Cref{thm:revseq-ejr-three-deletions}).

\begin{theorem}
\label{thm:revseq-prop-degree-zero}
For every integer $\ell\geq 1$, the proportionality degree of Reverse Sequential PAV at $\ell$ is exactly $0$.
\end{theorem}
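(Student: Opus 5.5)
The proportionality degree is trivially at least $0$, so the whole task is a construction. For every $\ell\ge1$ I will build an election with an $\ell$-cohesive group $V$ in which RevSeqPAV gives no voter of $V$ any approved winner.

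\textbf{Why the construction must look a certain way.} The proof of \Cref{thm:revseq-alpha-ejr} shows that a commonly approved candidate of an $\ell$-cohesive group can be deleted only when the group's voters still approve many candidates that will later be deleted, namely $t$ with $\ell\le k(1-1/t)$. So the voters of $V$ must hold many extra "decoy" approvals while the common candidates are removed. Those decoys must then be removed as well. Unlike \Cref{thm:revseq-ejr-three-deletions}, the decoys must not concentrate on the same voters as they disappear, otherwise their marginal losses blow up. I therefore give each voter of $V$ private decoys.

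\textbf{Construction.} Fix $\ell$, let $v=6\ell$ and $q=v$.
\begin{itemize}
\item Common candidates: $T=\{t_1,\dots,t_\ell\}$.
\item Group $V$: $v$ voters; voter $i\in V$ approves $T\cup P_i$, where the sets $P_i$ are pairwise disjoint private sets of $q$ candidates each.
\item Decoy supporters: every candidate in $\bigcup_i P_i$ is also approved by exactly one further voter, who approves only that candidate.
\item Committee block: a set $Y$ of $k$ candidates, approved by a block of $3k$ voters who approve exactly $Y$.
\end{itemize}
Thus $n=v+vq+3k$ and $m=k+\ell+vq$. Cohesion requires $v\ge \ell n/k=\ell v(1+q)/k+3\ell$, which holds once $k\ge 2\ell(1+v)$; I fix such a $k$.

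\textbf{Tracing the run.} Each $y\in Y$ always has marginal loss $3$.
\begin{itemize}
\item While members of $T$ remain, every voter of $V$ still approves all $q$ of her private decoys. Hence each remaining $t_j$ has marginal loss at most $v/(q+1)<1$.
\item Each decoy $p\in P_i$ has marginal loss $1+1/|A_i\cap C_s|>1$, because of its singleton supporter.
\item So all of $T$ is deleted first; fixed tie-breaking among the $t_j$ is irrelevant.
\item Afterwards every decoy has marginal loss at most $1+1$, which never exceeds $2<3$. This holds because decoys of different voters do not interact, and voter $i$'s contribution to a decoy is at most $1$.
\item Hence all $vq$ decoys are deleted next, and the output is $W=Y$.
\end{itemize}
No voter of $V$ approves anything in $Y$, so $\operatorname{avg}_V(W)=0$ for the $\ell$-cohesive group $V$. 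This proves the statement for every $\ell$.

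\textbf{Main obstacle.} The delicate point is balancing the parameters. The singleton support $w$ of the decoys must exceed the largest loss a common candidate reaches, about $v/(q+1)$. The loss of the $Y$ candidates must exceed $w+1$. Meanwhile $V$ must remain an $\ell n/k$ fraction of an electorate inflated by $vqw$ decoy supporters and $\Theta(k)$ block voters. Choosing $w=1$, $q\ge v$ and $k$ large relative to $\ell v$ reconciles all three constraints. I would double-check these strict inequalities, and that $\ell n/k\le v$ holds with integers, before writing the proof out.
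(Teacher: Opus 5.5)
Your construction is correct and is essentially the paper's: the common candidates look cheap because every voter of $V$ still approves many decoys, singleton supporters keep each decoy's loss strictly above that of the common candidates, and a large block $Y$ has loss above every decoy's, so $Y$ is exactly what survives. The only difference is in the parameters (you give each voter $q=6\ell$ private decoys with one supporter each, $\Delta(y)=3$ and $k\ge 2\ell(1+v)$, whereas the paper uses $d=4\ell$ blocks $X_j$ shared by $d+1$ voters, $d$ supporters per decoy and $k=d(d+1)$), and all your inequalities, including $\ell n/k\le v$ with equality at $k=2\ell(1+v)$, check out.
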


\begin{proof}
    First we note that
$d_{\mathrm{RevSeqPAV}}(\ell)\ge 0$ so it suffices to show that
$d_{\mathrm{RevSeqPAV}}(\ell)\le 0$.
To this end, fix $\ell\ge1$ and let $d=4\ell$. We construct an election containing an $\ell$-cohesive group whose members obtain no approved winner.
Let
$C^\star=\{c_1,\ldots,c_\ell\}$
be a set of $\ell$ common candidates. Create $d$ pairwise disjoint blocks
$X_1,\ldots,X_d$,
each containing exactly $d$ candidates. Finally, create a set
$Y=\{y_1,\ldots,y_k\}$
of $k=d(d+1)$ candidates.
The voters are as follows where
$R=d(3d^2+7d+3)$.

\begin{center}
\begin{tabular}{c|c|c}
Voter type & Number of voters & Approved candidates\\
\hline
$V_j$, for each $j\in\{1,\ldots,d\}$ & $d+1$ & $C^\star\cup X_j$\\
$U_x$, for each $x\in\bigcup_jX_j$ & $d$ & ${x}$\\
$Z$ & $R$ & $Y$
\end{tabular}
\end{center}

Let $V=\bigcup_{j=1}^dV_j$. Then
$|V|=d(d+1)$.
There are $d^2$ candidates in the sets $X_j$, so the second type of voters corresponds to $d^3$ voters. Therefore the total number of voters is
$n=d(d+1)+d^3+R=4d(d+1)^2$.

Since $d=4\ell$ and $k=d(d+1)$, we have
$\ell n/k=(d/4)\cdot 4d(d+1)^2/[d(d+1)]=d(d+1)=|V|$.
Moreover, every voter in $V$ approves all $\ell$ candidates in $C^\star$. Hence $V$ is $\ell$-cohesive.
We will show that Reverse Sequential PAV deletes first all candidates in $C^\star$, then all candidates in the sets $X_j$, leaving precisely $Y$ as the final committee.\looseness-1

Suppose that $q\ge1$ candidates from $C^\star$ remain and no candidate from any $X_j$ or from $Y$ has yet been deleted. A remaining common candidate $c\in C^\star$ has marginal loss
$\Delta(c)=d(d+1)/(q+d)$.
For any $x\in X_j$, its $d+1$ voters of type $V_j$ each currently approve $q+d$ candidates, while its $d$ singleton voters contribute one each. Hence
$\Delta(x)=(d+1)/(q+d)+d$.
The difference is
$\Delta(x)-\Delta(c)=(dq+1)/(q+d)>0$.
Thus every remaining common candidate has strictly smaller marginal loss than every candidate in the sets $X_j$.
For every $y\in Y$,
$\Delta(y)=R/k=(3d^2+7d+3)/(d+1)>d$,
while
$\Delta(c)\le d$.
Thus every common candidate also has strictly smaller marginal loss than every candidate in $Y$.
Consequently, Reverse Sequential PAV deletes all candidates in $C^\star$ before deleting anything else.

Now suppose all common candidates have been deleted. Consider a block $X_j$ with $s\ge1$ candidates still remaining. For every remaining $x\in X_j$,
$\Delta(x)=d+(d+1)/s\le2d+1$.
On the other hand, every $y\in Y$ still has
$\Delta(y)=(3d^2+7d+3)/(d+1)>2d+1$.
Therefore, as long as any candidate from some $X_j$ remains, Reverse Sequential PAV deletes an $X$-candidate rather than a $Y$-candidate.
It follows that all candidates in $C^\star\cup X_1\cup\cdots\cup X_d$ are deleted before any candidate in $Y$. Since $|Y|=k$, the final committee is exactly
$W=Y$.
\end{proof}

We note that RevSeqPAV can equivalently be viewed as a ranking rule. Starting from the full candidate set, the candidate deleted first is placed last in the ranking, the candidate deleted second is placed second-to-last, and so on. Consequently, for every $k$, the top-$k$ prefix of the resulting ranking is exactly the committee returned by RevSeqPAV when the target committee size is $k$.

We now recall a proportionality notion for rankings introduced by \citet{proprank}. 
The setting examined in that paper is as follows: each voter submits an approval ballot, yet the outcome of an election is a ranking over all candidates. For some $k$ we use $R_{\le k}$ to denote the first $k$ candidates in the output ranking.
Using the same construction as in \Cref{thm:revseq-prop-degree-zero}, we resolve a question left open in that work.
For a group $V$ and a prefix of length $k$, recall that its justifiable demand is
$jd(V,k)=\min\{\lfloor |V|k/n\rfloor,\lambda(V)\}$,
where $\lambda(V)=|\bigcap_{i\in V}A_i|$ is the number of candidates commonly approved by the group. Its average satisfaction in the prefix $R_{\le k}$ is
$\operatorname{avg}_V(R_{\le k})=\frac{1}{|V|}\sum_{i\in V}|A_i\cap R_{\le k}|$.
The proportionality quality of a ranking is the minimum ratio between the average satisfaction of a group $V$ in $R_{\le k}$ and the justifiable demand of $V$
over all values of $k$ and sets of voters $V$ such that $jd(V,k)>0$.\looseness-1

\begin{corollary}
\label{corol:qual}
    The worst-case proportionality quality of rankings returned by Reverse Sequential PAV is $0$.
\end{corollary}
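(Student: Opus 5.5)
We reuse the election built in the proof of \Cref{thm:revseq-prop-degree-zero} and read RevSeqPAV as a ranking rule, as described just above. By construction, the prefix $R_{\le k}$ of this ranking is exactly the committee RevSeqPAV returns for target size $k$. We take $k=d(d+1)=|Y|$. The proof of \Cref{thm:revseq-prop-degree-zero} shows that every candidate in $C^\star\cup X_1\cup\cdots\cup X_d$ is deleted before any candidate of $Y$. Hence these candidates occupy the bottom positions of the ranking, and the top-$k$ prefix is $R_{\le k}=Y$. That argument fixes only the committee size $k=d(d+1)$, which is the size we use here, so the deletion order carries over unchanged.

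Next, take the group $V=\bigcup_j V_j$ from that construction. Every voter in $V$ approves only candidates in $C^\star\cup X_j$, and none of these lies in $Y$. Therefore $|A_i\cap R_{\le k}|=0$ for all $i\in V$, and so $\operatorname{avg}_V(R_{\le k})=0$.

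It remains to check that $V$ is admissible, i.e.\ that $jd(V,k)>0$. We have
\[
\lfloor |V|k/n\rfloor=\lfloor d(d+1)\cdot d(d+1)/(4d(d+1)^2)\rfloor=\lfloor d/4\rfloor=\ell ,
\]
and $\lambda(V)\ge |C^\star|=\ell$. Hence $jd(V,k)=\ell\ge1$. (The group's common approvals are exactly $C^\star$, since distinct $V_j$ use disjoint blocks $X_j$, so in fact $\lambda(V)=\ell$.) The ratio for this pair $(V,k)$ is therefore $0/\ell=0$, and the minimum over all admissible pairs is at most $0$. This already holds for $\ell=1$, i.e.\ $d=4$.

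Finally, the quality is always at least $0$, because average satisfaction is nonnegative and $jd>0$ on every admissible pair. So the worst-case proportionality quality is exactly $0$. The only step needing care is matching the ranking prefix to the committee for this particular $k$, which follows from committee monotonicity of the elimination order, i.e.\ the prefix property noted above.
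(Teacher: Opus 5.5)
Your proof is correct and takes essentially the same approach as the paper. Both reuse the construction from Theorem~\ref{thm:revseq-prop-degree-zero}, identify the top-$k$ prefix of the RevSeqPAV ranking with the committee $Y$, and conclude that the $\ell$-cohesive group $V$ has average satisfaction $0$ while $jd(V,k)=\ell>0$. Your explicit computations $\lfloor |V|k/n\rfloor=\lfloor d/4\rfloor=\ell$ and $\lambda(V)=\ell$ only spell out steps the paper states more briefly.
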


\begin{proof}
Since proportionality quality is nonnegative, it suffices to construct a Reverse Sequential PAV ranking containing a prefix for which some group has average satisfaction $0$ while having positive justifiable demand.
Consider the construction used in Theorem~\ref{thm:revseq-prop-degree-zero} for some $\ell\ge1$. There, the final committee $W$ has size $k$, and there exists a group $V$ such that
$|V|=\ell n/k$,
$\lambda(V)\ge\ell$,
and
$|A_i\cap W|=0$
for every voter $i\in V$.
Let $R$ be the complete ranking produced by Reverse Sequential PAV. Since its top-$k$ prefix is precisely the set $W$, we have
$\operatorname{avg}_V(R_{\le k})=0$.
On the other hand,
$jd(V,k)=\min\{\lfloor |V|k/n\rfloor,\lambda(V)\}=\min\{\ell,\lambda(V)\}=\ell$.
Hence the worst-case proportionality quality of Reverse Sequential PAV is $0$.
\end{proof}

\section{Approximation Guarantees of RevSeqPAV}
\label{sec:approx}
In this section, we study how well RevSeqPAV approximates the maximum PAV score, i.e., how close the PAV score it achieves is to the optimum. SeqPAV guarantees a constant fraction of the optimal PAV score. As our first result shows, the same is far from true for RevSeqPAV on general instances.
Our proof leverages the observation by \citet{lac-sko:multiwinner-book} that for $k=1$ the rule's outcome can be counterintuitive.

\begin{theorem}
\label{thm:approxzero}
The worst-case approximation ratio of Reverse Sequential PAV with respect to the PAV score is $0$.\looseness-1
\end{theorem}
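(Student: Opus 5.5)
We will prove the statement by constructing, for every $\varepsilon>0$, an election with $k=1$ in which RevSeqPAV returns a committee whose PAV score is at most an $\varepsilon$ fraction of $\operatorname{OPT}_1$. The ratio is trivially nonnegative, so this is enough. For $k=1$ the PAV score of a singleton $\{c\}$ is just its approval count, so $\operatorname{OPT}_1$ is the largest approval count. We will therefore build an instance where RevSeqPAV eliminates a very popular candidate early and is left with a candidate of tiny support.

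\textbf{Construction.} Fix an integer $p\ge 2$. There is a candidate $a$ and $p$ further candidates $x_1,\ldots,x_p$. A large block of $N_a$ voters approves $\{a,x_1,\ldots,x_p\}$, and for each $j$ there is one voter approving only $\{x_j\}$. We will also add an extra candidate $z$ approved by a single voter alone, as a small final survivor; alternatively we simply let the final winner be one of the $x_j$.

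\textbf{Elimination sequence.} Initially each big-block voter approves $p+1$ candidates. This gives $\Delta(a)=N_a/(p+1)$, while $\Delta(x_j)=N_a/(p+1)+1>\Delta(a)$. If $z$ is present, $\Delta(z)=1$. We need $a$ to go first, which requires $N_a/(p+1)<1$ whenever $z$ is present. This conflicts with $a$ being popular, so the variant with $z$ fails, and we instead use a different lever.

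The correct lever is to dilute $a$'s supporters over many candidates. Use $q$ disjoint blocks $B_1,\ldots,B_q$ of $M$ voters each. Block $B_j$ approves $a$ together with $p$ private candidates $X_j$. Each private candidate additionally has $s$ singleton voters. Finally, a single candidate $y$ is approved by $R$ other voters, and $k=1$. Then $\Delta(a)=qM/(p+1)$ and $\Delta(x)=M/(p+1)+s$ for $x\in X_j$. We choose $q$ so that $qM/(p+1)$ is below $\min\{M/(p+1)+s,\,R\}$, so that $a$ is deleted first. After that, each $x\in X_j$ has loss at most $M+s$, and with $R>M+s$ all of the $X$-candidates are deleted before $y$. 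The winner is then $y$, with score $R$, while $a$ alone would score $qM$.

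\textbf{Parameters (the main obstacle).} The two requirements pull against each other. We need $a$ deleted first, i.e.\ $qM/(p+1)<M/(p+1)+s$, so $s$ must be large, yet we also need $R>M+s$ and $qM\gg R$. Concretely, take $s=(q-1)M/(p+1)+1$; $a$ is deleted first provided its loss $qM/(p+1)$ is also below $R$. Take $R=M+s+1\approx M(1+q/(p+1))$. The ratio $R/(qM)$ is then about $1/q+1/(p+1)$, which tends to $0$ when both $q$ and $p$ are large, e.g.\ $p=q\to\infty$. We must also check the singleton $X$-voters: the best such candidate has score $s+M$, which is less than $qM$ for large $q$, so $\operatorname{OPT}_1=qM$ in the limit. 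We also check that no $X$-candidate is preferred over $a$ at the first step, which holds by the choice of $s$.

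Tie-breaking and integrality will be handled by choosing $M$ divisible by $p+1$ and adding $+1$ slacks. Letting $p=q\to\infty$ then shows the ratio tends to $0$.
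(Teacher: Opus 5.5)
Your proposal is correct and is essentially the paper's proof: taking $p=q=d$ and $M=d+1$, your choice $s=(q-1)M/(p+1)+1$ gives $s=d$, which is exactly the paper's instance (there $\Delta(a)=d<d+1=\Delta(x)$), apart from your extra sink candidate $y$. That candidate is harmless but unnecessary, because, as you mention as an alternative, the surviving $X$-candidate already has score only $M+s=2d+1$, compared with $\operatorname{PAV}(\{a\})=d(d+1)$.
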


\begin{proof}
Fix an integer $d\ge2$ and say that $k=1$. Consider one candidate $a$ and $d$ pairwise disjoint sets
$X_1,\ldots,X_d$,
each containing exactly $d$ candidates. Thus there are $1+d^2$ candidates in total. 
The approval profile is as follows.

\begin{center}
\begin{tabular}{c|c|c}
Voter type & Number of voters & Approved candidates\\
\hline
$V_j$, for each $j\in\{1,\ldots,d\}$ & $d+1$ & $\{a\}\cup X_j$\\
$U_x$, for each $x\in\bigcup_jX_j$ & $d$ & ${x}$
\end{tabular}
\end{center}

Initially, candidate $a$ is approved by all $d(d+1)$ voters of the first type. Each such voter currently approves exactly $d+1$ candidates. Hence
$\Delta_C(a)=d(d+1)/(d+1)=d$.
Consider instead some candidate $x\in X_j$. It is approved by the $d+1$ voters in $V_j$, each of whom contributes $1/(d+1)$, and by its $d$ singleton voters, each of whom contributes $1$. Therefore
$\Delta_C(x)=(d+1)/(d+1)+d=d+1$.
Thus
$\Delta_C(a)=d<d+1=\Delta_C(x)$
for every $x\in\bigcup_jX_j$, and Reverse Sequential PAV deletes $a$ first.

Since the target committee size is $1$, the candidate eventually returned by Reverse Sequential PAV must therefore be some candidate $x\in\bigcup_jX_j$. Such a candidate is approved by exactly
$(d+1)+d=2d+1$
voters. Hence
$\operatorname{PAV}(W_{\mathrm{Rev}})=2d+1$.
By contrast, the singleton committee $\{a\}$ has PAV score equal to the approval score of $a$, namely
$\operatorname{PAV}(\{a\})=d(d+1)$.
For $d\ge2$, this is larger than $2d+1$, so $\{a\}$ is a PAV-optimal singleton committee. Consequently,
$\frac{\operatorname{PAV}(W_{\mathrm{Rev}})}{\operatorname{OPT}_1}=\frac{2d+1}{d(d+1)}$.
As $d$ tends to infinity, this ratio tends to $0$. Hence the worst-case approximation ratio of Reverse Sequential PAV with respect to the PAV score is $0$, even when the committee size is fixed to $k=1$. 
\end{proof}

\Cref{thm:approxzero} concerns the unrestricted worst-case. In the remainder of the paper, we identify restrictions under which RevSeqPAV admits positive approximation guarantees. 
Hence, as was the case with proportionality guarantees, there are realms where the rule performs well.
First, we identify $r$ as a crucial parameter again, as in \Cref{sec:prop}. 
We show that when only a small number of candidates are to be deleted, Reverse Sequential PAV achieves a strong approximation guarantee.
In particular, this guarantee is strong when $k$ constitutes a large fraction of $m$, and for every fixed number of deletions $r$ it converges to $1$ as $k$ tends to infinity.
Moreover, this result implies the existence of a regime, when $r-1<\frac{k}{e-1}\approx0.582k$, in which our lower bound for RevSeqPAV exceeds the standard \(1-1/e\) approximation guarantee known for SeqPAV.

\begin{theorem}
\label{thm:kmapprox}
Let $W$ be a commitee returned by Reverse Sequential PAV. Then
$\operatorname{PAV}(W)\ge \frac{k}{m-1}\operatorname{OPT}_k = \frac{k}{k+r-1}\operatorname{OPT}_k$.
\end{theorem}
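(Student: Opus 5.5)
The plan is a telescoping argument over the deletion steps. Each single deletion loses at most a $1/s$ fraction of the current PAV score, where $s$ is the size of the current set. The very first deletion is handled separately, because it is exact; this is what improves the naive bound $k/m$ to $k/(m-1)$.

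\textbf{Step 1 (one-step loss bound).} Fix a current set $S$ with $|S|=s$. I will reuse the identity from the proof of \Cref{thm:revseq-ejr-two-deletions}: $\sum_{c\in S}\Delta_S(c)=|\{i\in N: A_i\cap S\neq\emptyset\}|$. Each voter counted there has $\operatorname{sat}_i(S)=t\ge1$ and so contributes $H(t)\ge1$ to $\operatorname{PAV}(S)$. Hence $\sum_{c\in S}\Delta_S(c)\le\operatorname{PAV}(S)$. Since RevSeqPAV deletes a candidate $c_s$ of minimum marginal loss, $\Delta_S(c_s)\le\operatorname{PAV}(S)/s$. This gives
\[
\operatorname{PAV}(S\setminus\{c_s\})\ge\Bigl(1-\frac1s\Bigr)\operatorname{PAV}(S)=\frac{s-1}{s}\operatorname{PAV}(S).
\]

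\textbf{Step 2 (the first deletion is optimal).} Removing one candidate from $C$ to reach size $m-1$ by minimizing the marginal loss is the same as maximizing $\operatorname{PAV}(C\setminus\{c\})$ over $c$. So $C_{m-1}$ is a PAV-optimal committee of size $m-1$, and $\operatorname{PAV}(C_{m-1})=\operatorname{OPT}_{m-1}$. This is the same observation used for $m=k+1$ in \Cref{thm:revseq-ejr-two-deletions}. PAV score is monotone under adding candidates, because $H$ is nondecreasing. Any optimal size-$k$ committee can therefore be extended to a size-$(m-1)$ set with at least the same score, which yields $\operatorname{OPT}_{m-1}\ge\operatorname{OPT}_k$.

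\textbf{Step 3 (telescoping).} Apply Step 1 for $s=m-1,m-2,\dots,k+1$. This gives
\[
\operatorname{PAV}(W)\ge\prod_{s=k+1}^{m-1}\frac{s-1}{s}\operatorname{PAV}(C_{m-1})=\frac{k}{m-1}\operatorname{OPT}_{m-1}\ge\frac{k}{m-1}\operatorname{OPT}_k.
\]
Finally, substitute $m=k+r$ to obtain the stated form $\frac{k}{k+r-1}\operatorname{OPT}_k$.

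The only degenerate case is $r=1$. There the product is empty and the bound reads $\operatorname{PAV}(W)\ge\operatorname{OPT}_k$, which is exactly Step 2.

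The main subtlety, rather than a real obstacle, is noticing that Step 1 applied from $s=m$ would only give $k/m$. The improvement comes entirely from treating the first deletion as exact in Step 2. The inequality $\sum_{c\in S}\Delta_S(c)\le\operatorname{PAV}(S)$ in Step 1 is the key estimate, and it is immediate from $H(t)\ge1$ for $t\ge1$.
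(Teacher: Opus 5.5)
Your proposal is correct and follows essentially the same route as the paper. The paper likewise (i) treats the first deletion as exact, so $C_{m-1}$ is PAV-optimal among $(m-1)$-sets and dominates $\operatorname{OPT}_k$ by monotonicity; (ii) bounds each later deletion loss by $\operatorname{PAV}(C_s)/s$ via $\sum_{c\in C_s}\Delta_{C_s}(c)=|\{i\in N:A_i\cap C_s\neq\emptyset\}|\le\operatorname{PAV}(C_s)$; and (iii) telescopes from $s=m-1$ down to $k+1$ to get the factor $\frac{k}{m-1}$.
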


\begin{proof}
For every $s\in\{k,\ldots,m\}$, let $C_s$ denote the set of $s$ candidates remaining after $m-s$ deletions of Reverse Sequential PAV. Thus $C_m=C$ is the full candidate set and $C_k=W$ is the final committee.

We first consider the first deletion. 
Recall that the first deletion produces a PAV-optimal set among all sets of size $m-1$.
Now let $W^\star$ be a PAV-optimal committee of size $k$, so that
$\operatorname{PAV}(W^\star)=\operatorname{OPT}_k$.
Since $m-1\ge k$, we can extend $W^\star$ to some set $S$ of size $m-1$ by adding arbitrary candidates. The PAV score is monotone with respect to set inclusion: adding candidates cannot decrease the number of approved selected candidates of any voter, and hence cannot decrease their harmonic utility. Therefore,
$\operatorname{PAV}(S)\ge\operatorname{PAV}(W^\star)=\operatorname{OPT}_k$.
Since $C_{m-1}$ maximizes the PAV score over all sets of size $m-1$, we conclude that
$\operatorname{PAV}(C_{m-1})\ge\operatorname{PAV}(S)\ge\operatorname{OPT}_k$.

We next bound the loss caused by each subsequent deletion.
Consider a step at which the current candidate set is $C_s$, where $k+1\le s\le m-1$. Recall that
$\Delta_{C_s}(c)=\operatorname{PAV}(C_s)-\operatorname{PAV}(C_s\setminus\{c\})$
is the marginal PAV loss caused by deleting $c$.
For a voter $i$, let $t_i=|A_i\cap C_s|$. If $t_i=0$, then voter $i$ contributes $0$ to the marginal loss of every candidate. If $t_i>0$, then deleting any one of the $t_i$ candidates in $A_i\cap C_s$ contributes
$H(t_i)-H(t_i-1)=1/t_i$
to the marginal loss of that candidate. Since there are $t_i$ such candidates, voter $i$ contributes in total
$t_i\cdot(1/t_i)=1$
to the sum of the marginal losses of all candidates in $C_s$.
Consequently,
$\sum_{c\in C_s}\Delta_{C_s}(c)=|\{i\in N:A_i\cap C_s\neq\emptyset\}|$.
On the other hand, every voter in $\{i\in N:A_i\cap C_s\neq\emptyset\}$ contributes at least $1$ to $\operatorname{PAV}(C_s)$. Therefore,
$\sum_{c\in C_s}\Delta_{C_s}(c)\le\operatorname{PAV}(C_s)$.
There are $s$ candidates in $C_s$. Therefore, there exists some candidate $c\in C_s$ satisfying
$\Delta_{C_s}(c)\le\operatorname{PAV}(C_s)/s$.
Therefore, if $c_s$ denotes the candidate deleted at this step, then
$\Delta_{C_s}(c_s)\le\operatorname{PAV}(C_s)/s$.
It follows that
$\operatorname{PAV}(C_{s-1})
=\operatorname{PAV}(C_s)-\Delta_{C_s}(c_s)
\ge\operatorname{PAV}(C_s)-\frac{\operatorname{PAV}(C_s)}{s}
=\frac{s-1}{s}\operatorname{PAV}(C_s)$.\looseness-1

Thus, when $s$ candidates remain, Reverse Sequential PAV preserves at least an $(s-1)/s$ fraction of the current PAV score.
We now apply this inequality successively, starting from $C_{m-1}$ and following the sequence of deletions down to $C_k=W$. 
For the first step,
$\operatorname{PAV}(C_{m-2})\ge \frac{m-2}{m-1}\operatorname{PAV}(C_{m-1})$.
For the next step,
$\operatorname{PAV}(C_{m-3})\ge \frac{m-3}{m-2}\operatorname{PAV}(C_{m-2})$.
Substituting the first inequality into the second gives
$\operatorname{PAV}(C_{m-3})\ge \frac{m-3}{m-2}\frac{m-2}{m-1}\operatorname{PAV}(C_{m-1})$.
Continuing in the same way, each subsequent deletion contributes an additional factor $(s-1)/s$. Since the final set is $C_k=W$, we obtain
$\operatorname{PAV}(W)\ge
\frac{k}{k+1}\cdot
\frac{k+1}{k+2}\cdots
\frac{m-3}{m-2}\cdot
\frac{m-2}{m-1}
\operatorname{PAV}(C_{m-1})$.
Equivalently,
$\operatorname{PAV}(W)\ge
\left(\prod_{s=k+1}^{m-1}\frac{s-1}{s}\right)
\operatorname{PAV}(C_{m-1})\ge
\frac{k}{m-1}\operatorname{PAV}(C_{m-1})
\ge
\frac{k}{m-1}\operatorname{OPT}_k$.
\end{proof}

We now present a second positive result, which applies to instances with a bounded number of approvals per voter and shows that RevSeqPAV can perform well even independently of the value of $r$.

\begin{theorem}
\label{thm:boundedballot}
Let $W$ be a committee returned by Reverse Sequential PAV. Then
$\operatorname{PAV}(W)\ge \frac{1}{b}\operatorname{OPT}_k$.
\end{theorem}

\begin{proof}
Let $W^\star$ be a PAV-optimal committee of size $k$.
Also, let
$I=W\cap W^\star$,
$D=W^\star\setminus W$,
and
$R=W\setminus W^\star$.
Thus $D$ contains the candidates that belong to the optimal committee but were deleted by Reverse Sequential PAV, while $R$ contains the candidates that survive in $W$ but do not belong to the optimal committee.
Since $|W|=|W^\star|$, we have $|D|=|R|$. Therefore we can pair every candidate $c\in D$ with a distinct candidate $\phi(c)\in R$ through a bijection $\phi:D\rightarrow R$.

We first compare the marginal loss of a deleted optimal candidate $c$ with the marginal loss of its paired surviving candidate $\phi(c)$.
Fix some $c\in D$, and let $S_c$ be the set of candidates that are still present immediately before Reverse Sequential PAV deletes $c$. Define
$\delta_c=\Delta_{S_c}(c)$.
Since $\phi(c)\in W$, candidate $\phi(c)$ survives until the end and is therefore still present in $S_c$. Reverse Sequential PAV deletes a candidate of minimum marginal loss, so
$\delta_c=\Delta_{S_c}(c)\le \Delta_{S_c}(\phi(c))$.
Moreover, the marginal loss of a fixed surviving candidate can only increase as other candidates are deleted. Indeed, if a voter approves $\phi(c)$, their contribution to its marginal loss is the reciprocal of the number of their currently remaining approved candidates. As candidates are deleted, this denominator can only decrease. Hence
$\Delta_{S_c}(\phi(c))\le \Delta_W(\phi(c))$.
Combining these inequalities gives
$\delta_c\le \Delta_W(\phi(c))$.

We next relate $\delta_c$ to the benefit that the optimal committee obtains from candidate $c$.
Consider adding $c$ to the common part $I$. Let voter $i$ approve $c$, and suppose that immediately before $c$ was deleted she approved $t_i$ candidates that were still present. Their contribution to $\delta_c$ was therefore $1/t_i$.
Since every ballot contains at most $b$ candidates, we have $t_i\le b$. Hence
$1\le b/t_i$.
When we add $c$ to $I$, voter $i$ gains at most $1$: if she currently approves $q_i$ candidates in $I$, their gain is $1/(q_i+1)\le1$. Therefore
$1/(q_i+1)\le b/t_i$.
Thus, voter by voter, the contribution to the marginal gain of adding $c$ to $I$ is at most $b$ times the contribution to the marginal loss of $c$ when Reverse Sequential PAV deleted it. Summing over all voters who approve $c$, we obtain
$\operatorname{PAV}(I\cup\{c\})-\operatorname{PAV}(I)\le b\delta_c$.

Note that the PAV score is monotone and submodular. Indeed, if a voter currently approves \(q\) selected candidates, the marginal contribution of an additional approved candidate is \(1/(q+1)\), which can only decrease as the selected set grows.
Now consider adding all candidates in $D$ to $I$. 
Since the PAV score is submodular, the total gain from adding all of them is at most the sum of their individual gains when each is added directly to $I$. Hence
$\operatorname{PAV}(W^\star)-\operatorname{PAV}(I)\le \sum_{c\in D}\bigl(\operatorname{PAV}(I\cup\{c\})-\operatorname{PAV}(I)\bigr)\le b\sum_{c\in D}\delta_c$.
Using the pairing between $D$ and $R$, together with $\delta_c\le\Delta_W(\phi(c))$, we get
$\sum_{c\in D}\delta_c\le\sum_{w\in R}\Delta_W(w)$.

It remains to bound the latter quantity. Starting from $W$, delete the candidates in $R$ one by one until only $I$ remains. The marginal loss of a candidate can only increase as other candidates are deleted. Therefore, for every $w\in R$, its marginal loss measured initially at $W$ is no larger than its marginal loss at the moment when it is actually removed in this sequence. Consequently,
$\sum_{w\in R}\Delta_W(w)\le \operatorname{PAV}(W)-\operatorname{PAV}(I)$.
Hence,
$\operatorname{PAV}(W^\star)-\operatorname{PAV}(I)\le b\bigl(\operatorname{PAV}(W)-\operatorname{PAV}(I)\bigr)
\Leftrightarrow \operatorname{PAV}(W^\star)\le b\operatorname{PAV}(W)-(b-1)\operatorname{PAV}(I)\le b\operatorname{PAV}(W)$.
\end{proof}

We conclude by establishing tightness for the result of \Cref{thm:boundedballot}.

\begin{theorem}
\label{thm:boundedballottight}
 For every integer $b\ge2$ and every $\varepsilon>0$, there exists an election for which
$\operatorname{PAV}(W)<(\frac1b+\varepsilon)\operatorname{OPT}_1$,
where $W$ is the committee returned by Reverse Sequential PAV.
\end{theorem}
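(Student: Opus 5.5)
We will adapt the construction from the proof of \Cref{thm:approxzero} so that every ballot has at most $b$ candidates, and keep $k=1$. For $k=1$ we have $\operatorname{OPT}_1$ equal to the largest approval score of a single candidate, so it suffices to build an instance in which Reverse Sequential PAV deletes a heavily approved candidate $a$ first, and every other candidate has approval score only about a $\frac1b$ fraction of that of $a$.

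\textbf{Construction.} Fix integers $p\ge2$ and $M\ge1$, to be chosen large later. Take one candidate $a$ and $p$ pairwise disjoint blocks $X_1,\ldots,X_p$, each of size $b-1$. For each $j$, let $M$ voters approve $\{a\}\cup X_j$; these ballots have size exactly $b$. For each $x\in\bigcup_j X_j$, add $s$ singleton voters approving only $x$, where
\[
s=\left\lfloor \tfrac{(p-1)M}{b}\right\rfloor+1 .
\]
Every ballot then has size at most $b$.

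\textbf{Step 1: $a$ is deleted first.} Initially every block voter approves $b$ remaining candidates. Hence
\[
\Delta(a)=\frac{pM}{b},
\qquad
\Delta(x)=\frac{M}{b}+s \quad \text{for every } x\in\bigcup_j X_j .
\]
The choice of $s$ gives $s>(p-1)M/b$, so $\Delta(a)<\Delta(x)$ strictly for every $x$. Thus $a$ is deleted first, independently of tie-breaking.

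\textbf{Step 2: compare the winner with the optimum.} Since $k=1$, the returned committee $W$ is a single candidate $x\in\bigcup_j X_j$. Its approval score, and hence its PAV score, is
\[
\operatorname{PAV}(W)=M+s\le M+\frac{(p-1)M}{b}+1 .
\]
On the other hand, $\operatorname{OPT}_1\ge pM$, witnessed by the committee $\{a\}$. Therefore
\[
\frac{\operatorname{PAV}(W)}{\operatorname{OPT}_1}\le \frac1p+\frac{p-1}{pb}+\frac{1}{pM}
=\frac1b+\frac{b-1}{pb}+\frac{1}{pM}.
\]

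\textbf{Step 3: choose the parameters.} Take $p$ large enough that $\frac{b-1}{pb}<\varepsilon/2$, and take $M\ge1$ large enough that $\frac{1}{pM}<\varepsilon/2$. Then $\operatorname{PAV}(W)<\left(\frac1b+\varepsilon\right)\operatorname{OPT}_1$, as claimed.

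\textbf{Main obstacle.} The delicate point is the strict inequality in Step 1. The number of singleton supporters $s$ must be large enough to protect each $x$ from deletion before $a$. At the same time, it must stay close to $(p-1)M/b$, so that the approval score of $x$ remains near $pM/b$. The floor-plus-one choice of $s$ handles both requirements, and its rounding error of at most $1$ is absorbed by letting $M$ grow.
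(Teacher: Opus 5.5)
Your proof is correct and is essentially the paper's construction: the paper uses your instance with $M=b$ block voters per group and $p=t$ blocks, and then $s=\lfloor (p-1)b/b\rfloor+1=p$, giving $\Delta(a)=t<t+1=\Delta(x)$ and ratio exactly $\frac1b+\frac1t$. So the extra parameter $M$ and the floor-rounding you flag as the main obstacle are unnecessary. Relying only on $\operatorname{OPT}_1\ge pM$ is slightly cleaner than the paper's separate check that $\{a\}$ is optimal.
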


\begin{proof}
Fix $b\ge2$ and an integer $t> 0$. Consider an instance where $k=1$ and there is one candidate $a$ and $t$ pairwise disjoint sets
$X_1,\ldots,X_t$,
each containing exactly $b-1$ candidates.
Consider a set of voters $V_j$, for each $j\in\{1,\ldots,t\}$ that contains $b$ voters each approving $\{a\}\cup X_j$. Also say that there are $t$ voters in each set $U_x$, for each $x\in\bigcup_jX_j,$ each one approving $x$.
Note that the maximum ballot size is $b$.

We first consider candidate $a$. It is approved by all $tb$ voters in the groups $V_1,\ldots,V_t$. Each of these voters currently approves exactly $b$ candidates, and therefore contributes $1/b$ to the marginal loss of $a$. Thus
$\Delta_C(a)=tb\cdot\frac1b=t$.
Now consider any candidate $x\in X_j$. Candidate $x$ is approved by the $b$ voters in $V_j$, who together contribute
$b\cdot\frac1b=1$,
and by its $t$ supporters, who contribute $t$ in total. Hence
$\Delta_C(x)=t+1$.
Therefore,
$\Delta_C(a)=t<t+1=\Delta_C(x)$
for every $x\in\bigcup_jX_j$. Consequently, Reverse Sequential PAV deletes $a$ in its first step.\looseness-1

The candidate eventually returned by the rule must be some candidate $x\in\bigcup_jX_j$. Every such candidate has $b+t$ approvals: $b$ from the voters in its group $V_j$ and $t$ from its singleton supporters. Therefore,
$\operatorname{PAV}(W)=b+t$.
By contrast, candidate $a$ has $tb$ approvals. For all sufficiently large $t$, we have $tb>b+t$, and hence $\{a\}$ is a PAV-optimal singleton committee. Thus
$\operatorname{OPT}_1=tb$.
It follows that
$\frac{\operatorname{PAV}(W)}{\operatorname{OPT}_1}
=\frac{t+b}{tb}
=\frac1b+\frac1t$.
As $t$ tends to infinity, this ratio converges to $1/b$. Hence, for every $\varepsilon>0$, choosing $t>1/\varepsilon$ and $t>\frac{b}{b-1}$ gives
$\frac{\operatorname{PAV}(W)}{\operatorname{OPT}_1}<\frac1b+\varepsilon$.
\end{proof}

\section{Conclusions}
Our results reveal a sharp contrast between the behavior of RevSeqPAV in adversarial and in restricted instances.
In the worst case, it may fail EJR and produce a ranking of proportionality quality $0$, has proportionality degree $0$, and provides no positive constant approximation of the optimal PAV score. On the positive side, it satisfies EJR for $r\leq2$ (and this is tight), provides a $\frac{1}{r}$-approximation of EJR and offers the representation guarantee of EJR to sufficiently large cohesive groups of voters; it also achieves good approximations of the optimal PAV-score when $k$ is close to $m$ (specifically $\frac{k}{m-1}$) or voters approve a few candidates (specifically $\frac{1}{b}$—and this is tight).

An interesting question is whether bounded ballot size also improves proportionality guarantees. It is also important to understand which other restrictions can yield a positive proportionality degree. More broadly, understanding the domains on which SeqPAV and RevSeqPAV coincide, or even return a PAV-optimal committee, is an important direction. Finally, further empirical comparisons, e.g. complementing those already conducted for SeqPAV, could clarify when the two rules behave similarly in practice, when their differences emerge, and how their performance compares across different families of elections and various metrics.\looseness-1

\small{
\paragraph{Acknowledgements.} 
G. Papasotiropoulos was supported by the European Union (ERC, PRO-DEMOCRATIC, 101076570). Views and opinions expressed are however those of the author only and do not necessarily reflect those of the European Union or the European Research Council. Neither the European Union nor the granting authority can be held responsible for them. ChatGPT-5.6 Sol assisted with developing counterexamples, refining proofs, and editing the manuscript; it also identified the connection to the proportional rankings literature and provided the argument establishing \Cref{corol:qual}. 
}

\vspace{-0.3cm}
\begin{figure}[h]
\includegraphics[width=0.2\linewidth]{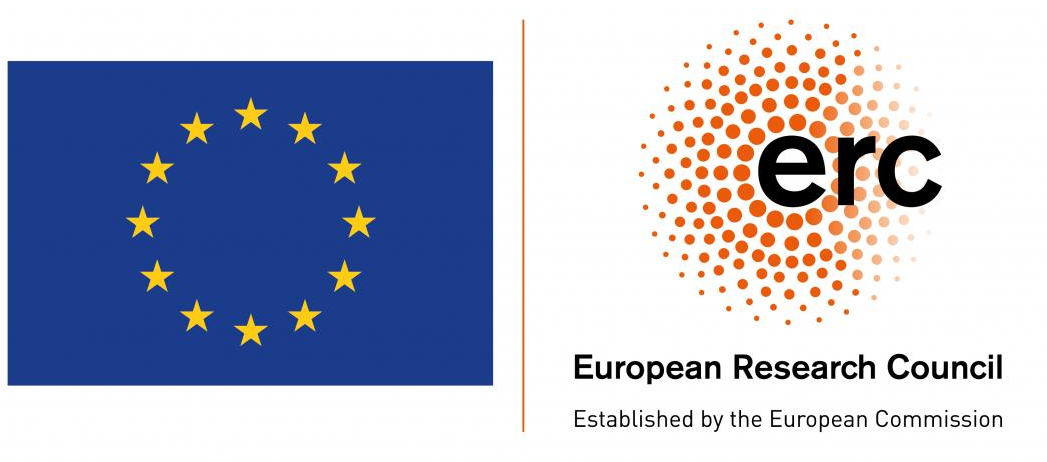}
\end{figure}

\bibliographystyle{apalike}
\bibliography{revseqpavbib}
\end{document}